\newcommand{\mymail}[1]{\href{mailto:#1}{\texttt{#1}}}
\newcommand{\setauthA}[1]{\def\authA{#1}}
\newcommand{\setauthB}[1]{\def\authB{#1}}
\def\printA{\begin{tabular}{l} \authA \end{tabular}}
\def\printB{\begin{tabular}{l} \authB \end{tabular}}
\newcommand{\makemytitle}[1]{\begin{center}{\textsf{\LARGE #1}}
  \end{center}
}
\providecommand{\mc}[1]{\mathcal#1}
\providecommand{\mc}[1]{\mathcal#1}
\newcommand{\R}{{\mathbb R}}
\providecommand{\T}{\top} %transpose
\providecommand{\wt}[1]{\widetilde{#1}}
\providecommand{\wh}[1]{\widehat{#1}}
\providecommand{\nnorm}[1]{ \lVert#1 \rVert}
\newcommand{\scp}[2]{\left\langle#1, #2\right\rangle}
\newcommand{\nscp}[2]{\langle#1, #2\rangle}
\newcommand{\blanco}[1]{  }
\newcommand{\deriv}[3]{%
\ifthenelse{#1 = 1}{\frac{d\,#2}{d\,#3}}{\frac{d^{{#1}} #2}{d{#3}^{{#1}}}}
}
\newcommand{\partials}[3]{%
\ifthenelse{#1 = 1}{\frac{\partial\,#2}{\partial\,#3}}{\frac{\partial^{#1}
    #2}{\partial#3^{#1}}}
} 
\def\su{\sum_{i=1}^n}
\def \coloneq{\mathrel{\mathop:}=}
\newtheorem{theo}{Theorem}
\newtheorem{lemmachen}{Theorem}
 \newtheorem{lemma}[lemmachen]{Lemma}
\def\R{\mathbb{R}}
\def\epss{\varepsilon}
\newcommand\footnoteref[1]{\protected@xdef\@thefnmark{\ref{#1}}\@footnotemark}
\begin{document}
\thispagestyle{firststyle}

\makemytitle{{\bfseries {A Note on Coding and Standardization of Categorical Variables in (Sparse) Group Lasso Regression}}}
\vskip 3.5ex
% authorg
{\large\begin{center}
\printB
\printA
\vskip1.5ex
George Mason University, Fairfax, VA 22030, USA\\[.5ex]
$^{*}$\mymail{fdetmer@gmu.edu},  $^{\dagger}$\mymail{mslawsk3@gmu.edu}
\end{center}}

\vskip 3.5ex

\begin{abstract} 
  Categorical regressor variables are usually handled by introducing a set of indicator variables, and imposing
  a linear constraint to ensure identifiability in the presence of an intercept, or equivalently, using one of
  various coding schemes. As proposed in Yuan and Lin {\small{[\emph{J.~R.~Statist.~Soc.~B, \textbf{68} (2006), 49--67}]}}, the group lasso is a natural and computationally convenient approach to perform variable selection in settings with categorical covariates. As pointed out by Simon and Tibshirani {\small{[\emph{Stat.~Sin., \textbf{22} (2011), 983--1001}]}}, "standardization" by means of block-wise orthonormalization of column submatrices each corresponding to one group of variables can substantially
  boost performance. In this note, we study the aspect of standardization for the special case of categorical predictors in detail. The main result is that orthonormalization is not required; column-wise scaling of the design matrix followed by re-scaling and centering of the coefficients is shown to have exactly the same effect. Similar reductions can be achieved in the case of interactions. The extension to the so-called sparse group lasso, which additionally promotes within-group sparsity, is considered as well. The importance of proper standardization is illustrated via extensive simulations.      
\end{abstract}

\section{Introduction}\label{sec:intro}
The treatment of categorical predictor variables is covered in many widely used textbooks on linear regression. Given
$n$ observations $\{c_1, \ldots, c_n\}$ of a categorical variable with levels $\{1,\ldots,L\}$, we can define indicator
variables $x_{i\l}$ such that $x_{i\l} = 1$ if $c_i = \l$ and $0$ otherwise, $i=1,\ldots,n$, $l=1,\ldots,L$. Denoting the
corresponding regression coefficients by $\beta_1, \ldots, \beta_L$, we note that the parameters of a linear predictor-based regression model
with intercept $\beta_0$
\begin{equation}\label{eq:indicator}
\eta_i = \beta_0 + \beta_1 x_{i1} + \ldots + \beta_L x_{iL}, \;\;\, i=1,\ldots,n,
\end{equation}
where $\eta_i$ denotes the linear predictor for observation $i=1,\ldots,n$, are not identifiable as can be seen, e.g., from the corresponding matrix representation
\begin{equation*}
\eta = [\bm{1}_n \;\, X] \;\, [\beta_0; \;\, \beta], \qquad X = (x_{i\l})_{1 \leq i \leq n,\, 1 \leq \l \leq L}, \quad \eta = (\eta_1 \ldots \eta_n)^{\T}, \quad \beta=(\beta_1 \ldots \beta_L)^{\T}
\end{equation*}
with $\bm{1}_d$ representing a vector of ones for a positive integer $d$, and $[A \;\, B]$ and $[A; \;\, B]$ denoting the column-wise respectively row-wise concatenation of matrices $A$ and $B$ having an identical number of rows respectively columns. By construction, $\bm{1}_n \in \mc{X} \coloneq \text{range}(X)$, where $\text{range}(\cdot)$ returns the column space of a matrix, and thus $[\bm{1}_n \;\, X]$ has a non-trivial null space. Identifiability of parameters can be restored by imposing linear constraints on $\beta$. Common constraints include $\beta_{r} = 0$ for some $r \in \{1,\ldots,L\}$ or $\sum_{\l = 1}^L \beta_{\l} = 0$. Model \eqref{eq:indicator} can accordingly be re-parameterized as
\begin{equation}\label{eq:coding}
\eta =   [\bm{1}_n \;\, Z]   \;\, [\gamma_0; \;\, \gamma]
\end{equation}
with the columns of $Z \in \R^{n \times (L-1)}$ forming a basis of the linear space $\mc{Z} = \{z \in \R^n:\; z = X b, \; b \in \mc{C} \}$ with $\mc{C} = \{b \in \R^L: \; a^{\T} b = 0 \}$ for $a \in \R^L$ such that $\mc{X} = \text{range}(\bm{1}_n) + \mc{Z}$; here, "$+$" denotes the sum of linear spaces. Depending on the specific choice of the linear constraint represented by $a$, the matrix $Z$ in \eqref{eq:coding} can be chosen to match common coding schemes (see Figure \ref{fig:codingschemes} for an illustration), e.g.~:\begin{figure}[t]
  \begin{center}
    $\begin{array}{ccc}
       \begin{array}{c}
         1 \\
         \\
         \\
         \\
         \\[-.5ex]
         L
       \end{array}
      \begin{bmatrix}
        1 & 0 & \ldots &  0       \\
        0  & 1  & 0 & \vdots    \\
        \vdots  & \ddots & \ddots & 0\\
         0 & \ldots &  0 & 1    \\
         0 &  \ldots  &  \ldots   & 0   
      \end{bmatrix}  & \qquad  \begin{array}{c}
         1 \\
         \\
         \\
         \\
         \\[-.5ex]
         L
                               \end{array}
       \begin{bmatrix}

         1 & 0 & \ldots &  0       \\
        0  & 1  & 0 & \vdots    \\
        \vdots  & \ddots & \ddots & 0\\
         0 & \ldots &  0 & 1    \\
         -1 &  \ldots  &  \ldots   & -1   
        
      \end{bmatrix} & \qquad  \begin{array}{c}
         1 \\
         \\
         \\
         \\
         \\[-.5ex]
         L
                              \end{array}
                                 \mbox{\small{$\begin{bmatrix}
        -1 & \ldots & \ldots & -1 \\
        1  & -1 & \ldots & \vdots \\
        0  & \ddots & \ddots & \vdots \\
        \vdots &  0    & \mbox{{\small $L-2$}}        &   -1 \\
        0 & \ldots & 0 & \mbox{{\small $L-1$}}
      \end{bmatrix}$}} \\
\\[-1.5ex]
       \text{Reference coding} & \qquad \text{Effect coding} & \qquad \text{Helmert coding}
    \end{array}$
  \end{center}
\vspace*{-2ex}
  \caption{Illustration the three coding schemes discussed below. The $l$-th row of the above matrices contains the representation of the $l$-th category, $l=1,\ldots,L$.}\label{fig:codingschemes}
\end{figure}
\vskip.75ex
\noindent \emph{Reference or treatment coding}: $Z = (z_{ij})_{1 \leq i \leq n, 1 \leq \l \leq L-1}$ is such that
$z_{ic_i} = 1$ if $c_i < L$, $i=1,\ldots,n$, while all other entries are equal to zero; without loss of generality, the $L$-th level is here taken as reference category. 
\vskip1.5ex
\noindent \emph{Effect coding}: $Z$ is such that for $i=1,\ldots,n$, we have $z_{ic_i} = 1$ and $z_{im} = 0$, $1 \leq m \leq L-1$, $m \neq c_i$ if $c_i < L$, and if $c_i = L$, then $z_{im} = -1$ for all $1 \leq m \leq L-1$.  
\vskip1.5ex
\noindent \emph{Helmert coding}: $Z$ is such that for $i=1,\ldots,n$, we have $z_{im} = -1$ for $m \geq c_i$,
$z_{im} = 0$ for $m \leq c_i-2$, and $z_{i(c_i-1)} = c_i-1$ if $c_i \geq 2$. 
\vskip2.5ex
\noindent Consider now generic model fitting problems of the form 
\begin{align}
&\min_{\beta_0, \beta} \su \mc{L}(y_i, \beta_0 + (X\beta)_i) \quad \text{subject to} \;\,\beta \in \mc{C}, \label{eq:fit_constrained}\\
&\min_{\gamma_0, \gamma} \su \mc{L}(y_i, \gamma_0 + (Z\gamma)_i), \label{eq:fit_coded}
\end{align}
where $\{ y_i \}_{i = 1}^n$ are observed responses and $\mc{L}: \R \times \R \rightarrow \R_+$ is a loss function that is strictly convex in its second argument. Let $(\wh{\beta}_0, \wh{\beta})$ and $(\wh{\gamma}_0, \wh{\gamma})$ denote the minimizers of \eqref{eq:fit_constrained} and \eqref{eq:fit_coded}, respectively. In virtue of the requirement $\mc{X} = \text{range}(\bm{1}_n) + \mc{Z}$, minimization problems \eqref{eq:fit_constrained} and \eqref{eq:fit_coded} are equivalent in terms of fit, i.e.,  
\begin{equation}\label{eq:invariance_fits}
\wh{\beta}_0 \bm{1}_n + X\wh{\beta} = \wh{\gamma}_0 \bm{1}_n + Z\wh{\gamma}
\end{equation}
independent of the specific coding scheme underlying $Z$, and also independent of whether the linear constraint on $\beta$ in \eqref{eq:fit_constrained} matches $Z$ in \eqref{eq:fit_coded}. If $\mc{C}$ is in correspondence to $Z$, it additionally holds that $\wh{\beta}_0 = \wh{\gamma}_0$.
\vskip1.5ex
\noindent \emph{Example 1.} Consider $\mc{C} = \{b \in \R^L: \bm{1}_L^{\T} b = 0\}$. Both effect coding and Helmert coding yield a matrix $Z$ matching that constraint.
\vskip.75ex 
\noindent \emph{Example 2.} Let $\mc{C}$ be as in Example 1 and let $Z$ correspond to reference coding with $L$ being the reference level. While $\mc{C}$ does not match reference coding, simple algebra shows that 
\begin{equation}\label{eq:from_coding_to_reference}
\wh{\gamma}_l = \wh{\beta}_l + \sum_{m = 1}^{L-1} \wh{\beta}_m =  \wh{\beta}_l - \wh{\beta}_L, \quad l=1,\ldots,L-1,  \qquad \wh{\gamma}_0 = \wh{\beta}_0 + \wh{\beta}_L.%\qquad \wh{\beta}_L = -\sum_{m = 1}^{L-1} \wh{\beta}_m 
\end{equation}

\subsubsection*{The group lasso penalty and categorical predictors}      
In their seminal paper \cite{YuanLin2006}, Yuan and Lin suggest the group lasso penalty as one way of performing regularization and variable selection for categorical predictors. Given $p$ categorical predictors with levels $L_j$ levels, $j=1,\ldots,p$, the linear predictor is of the form 
\begin{equation*}
%\eta = \beta_0  + \sum_{j=1}^p \sum_{l=1}^{L_j} \beta_l^{(j)} x_{il}^{(j)}, \;\;\, i=1,\ldots,n,
\eta =  [\bm{1}_n \;\, X^{(1)} \; \ldots \; X^{(p)}] \;\, [\beta_0; \;\, \beta^{(1)}; \; \ldots \; ; \beta^{(p)}]. 
\end{equation*}   
where $X^{(j)} = (x_{i\l}^{(j)})_{1 \leq i \leq n,\, 1 \leq \l \leq L_j}$ contains the indicator variables for categorical predictor 
$j=1,\ldots,p$. Extending formulations \eqref{eq:fit_constrained} and \eqref{eq:fit_coded}, group lasso-regularized regression yields the optimization problems  
\begin{align}
&\min_{\beta_0, \{ \beta^{(j)} \}} \su \mc{L}(y_i, \beta_0  +  \textstyle{\sum_{j=1}^p (X^{(j)} \beta^{(j)})_i}) +  \lambda \displaystyle\sum_{j = 1}^p \sqrt{\text{df}_j} \nnorm{\beta^{(j)}}_2 \notag \\ 
&\qquad \; \; \quad  \text{subject to} \;\,\beta_j \in \mc{C}_j, \; j=1,\ldots,p,\label{eq:fit_constrained_grouplasso}\\[2ex]
&\min_{\gamma_0, \{\gamma^{(j)} \}} \su \mc{L}(y_i, \beta_0  + \textstyle{\sum_{j=1}^p (Z^{(j)} \gamma^{(j)})_i}) + \lambda \displaystyle\sum_{j = 1}^p \sqrt{\text{df}_j} \nnorm{\gamma^{(j)}}_2, \label{eq:fit_coded_grouplasso}
\end{align}
where $\lambda \geq 0$ is the regularization parameter, $\text{df}_j = L_j-1$ denotes the "degrees of freedom" for group $j$, $\mc{C}_j$ is the linear constraint set for $\beta_j$, and $Z^{(j)} \in \R^{n \times (L_j-1)}$ is in correspondence to $(X^{(j)}, \mc{C}_j)$, $j=1,\ldots,p$. As elaborated in \cite{YuanLin2006}, the coefficients $\{ \wh{\beta}^{(j)} \}$ and $\{ \wh{\gamma}^{(j)} \}$ minimizing \eqref{eq:fit_constrained_grouplasso} and \eqref{eq:fit_coded_grouplasso}, respectively, tend to be sparse at the group level, i.e., depending on the magnitude of $\lambda$, one may have $\wh{\beta}^{(j)} \equiv 0$ or $\wh{\gamma}^{(j)} \equiv 0$ for many $j \in \{1,\ldots,p\}$. Various statistical properties of the group lasso are studied in the literature, cf., e.g., \cite{HuangZhang2010, Lounici2011, Meier2008}.       

\section{Standardization}
As opposed to a regularization-free setting, the group lasso fit is no longer invariant under changes of the constraint sets $\{ \mc{C}_j \}_{j = 1}^p$ or the coding scheme. For example, when using reference coding, the fit will depend on the choice of the reference category. Similarly, while the constraint
sets $\mc{C}_j = \{b \in \R^{L_j}:  \bm{1}_{L_j}^{\T} b = 0\}$, $j = 1,\ldots,p$, is in full correspondence to both effect coding and Helmert coding, the fits $X^{(j)} \wh{\beta}^{(j)}$ and $Z^{(j)} \wh{\gamma}^{(j)}$, $j=1,\ldots,p,$ resulting from
\eqref{eq:fit_constrained_grouplasso} and \eqref{eq:fit_coded_grouplasso}, respectively, now generally differ and are dependent on the choice of the coding scheme. This issue is encountered for other regularization schemes than the group lasso as well, cf.~\cite{Chiquet2016}, and already arises for $p = 1$; for ease of presentation, the subsequent discussion is developed with respect to that case, writing $X^{(1)} = X$ and $Z^{(1)} = Z$ etc.
\vskip2ex
\noindent
% It turns out that the above lack of invariance can be addressed by replacing the group lasso penalty term $\nnorm{\beta}_2$ respectively $\nnorm{\gamma}_2$ by
% $\nnorm{X \beta}_2$ respectively $\nnorm{Z \gamma}_2$.
% \begin{lemma} Consider the optimization problems 
% \begin{align}
% &\min_{\beta_0, \beta} \su \mc{L}(y_i, \beta_0 + (X\beta)_i) + \lambda \nnorm{X \beta}_2 \quad \text{subject to} \;\, \beta \in \mc{C} \label{eq:fit_constrained_reg}\\
% &\min_{\gamma_0, \gamma} \su \mc{L}(y_i, \gamma_0 + (Z\gamma)_i) + \lambda \nnorm{Z \gamma}_2, \label{eq:fit_coded_reg}
% \end{align}
% It then holds that $\wh{\beta}_0 = \wh{\gamma}_0$ and $X \wh{\beta} = Z \wh{\gamma}$, irrespective of whether $\mc{C}$ is in correspondence to $Z$. 
% % &\min_{\alpha_0, \alpha} \su \mc{L}(y_i, \alpha_0 + (U \alpha)_i) + \lambda \nnorm{\alpha}_2, \label{eq:fit_ortho_reg}
% % \end{align}
% \end{lemma}   
It turns out that the above lack of invariance can be addressed by replacing the group lasso penalty term $\nnorm{\beta}_2$ respectively $\nnorm{\gamma}_2$ by
$\nnorm{\Pi^{\perp} X \beta}_2$ respectively $\nnorm{\Pi^{\perp} Z \gamma}_2$, where
$\Pi^{\perp}: \R^n \rightarrow \R^n$ denotes the projection on the orthogonal complement of
$\text{range}(\bm{1}_n)$; in other words, $\Pi^{\perp}$ equals the centering operator that subtracts the column means from each column of $X$ respectively $Z$. By construction, this replacement restores invariance according to \eqref{eq:invariance_fits} as it holds in the absence of a penalty. Note that centering via action of $\Pi^{\perp}$ becomes crucial for this invariance as the constant terms $\beta_0$ respectively $\gamma_0$ do not get penalized, hence $X$ and $Z$ need to be completely disentangled from the constant term $\bm{1}_n$.

\subsubsection*{The role of orthonormalization}
From the perspective of optimization, it is more convenient to replace
$\Pi^{\perp} X$ respectively $\Pi^{\perp} Z$ by a matrix $U \in \R^{n \times (L-1)}$ whose columns form an orthonormal basis vectors of $\text{range}(\Pi^{\perp} X) = \text{range}(\Pi^{\perp} Z)$; the matrix $U$ can be obtained from a QR decomposition or an SVD of $\Pi^{\perp} Z$ respectively $\Pi^{\perp} X$. As a result, using that $\nnorm{U \alpha}_2 = \nnorm{\alpha}_2$ for all $\alpha \in \R^{L-1}$, the optimization problems
\begin{align}
&\min_{\beta_0, \beta} \su \mc{L}(y_i, \beta_0 + (\Pi^{\perp}X\beta)_i) + \lambda \nnorm{\Pi^{\perp} X \beta}_2 \quad \text{subject to} \;\,\beta \in \mc{C}, \label{eq:fit_constrained_reg}\\
&\min_{\gamma_0, \gamma} \su \mc{L}(y_i, \gamma_0 + (\Pi^{\perp} Z\gamma)_i) + \lambda \nnorm{\Pi^{\perp} Z \gamma}_2, \label{eq:fit_coded_reg} \\
&\min_{\alpha_0, \alpha} \su \mc{L}(y_i, \alpha_0 + (U \alpha)_i) + \lambda \nnorm{\alpha}_2, \label{eq:fit_ortho_reg}
\end{align}
are equivalent. Formulation \eqref{eq:fit_ortho_reg} enables the straightforward use of popular optimization algorithms such as block coordinate descent and proximal methods. Orthonormalization in regression with group lasso regularization is discussed and recommended in \cite{Simon2012}, however without a specific treatment of the case of categorical predictors. In order to gain some intuition about the role of standardization in that case, it is instructive to compare \eqref{eq:fit_constrained_reg} with constraint set
$\mc{C} = \{b \in \R^{L}:  \bm{1}_{L}^{\T} b = 0\}$ to the ``vanilla'' group lasso formulation with penalty term $\nnorm{\beta}_2$. Recalling that $X$ equals the matrix of indicator variables
with a single one per row, we have
\begin{equation}\label{eq:penalty_expanded}
  \nnorm{\Pi^{\perp} X \beta}_2 = \left( \textstyle\sum_{l = 1}^L n_l \beta_l^2 - n (\beta^{\T} \bar{x})^2 \right)^{1/2}, \qquad
  \bar{x} \coloneq \left(n_1/n, \ldots, n_L/n \right)^{\T},
\end{equation}
where the $\{ n_l \}_{l = 1}^L$ denote the frequencies of the $L$ categories. If the proportions
of categories are perfectly balanced, i.e., $n_l = n/L$, $l=1,\ldots,L$, it holds that
$\nnorm{\Pi^{\perp} X \beta}_2 = \nnorm{\beta}_2$ as long as $\beta \in \mc{C}$. On the other hand, the two penalty terms $\nnorm{\Pi^{\perp} X \beta}_2$ and $\nnorm{\beta}_2$ can differ dramatically in unbalanced settings; the vanilla group lasso penalty $\nnorm{\beta}_2$ tends to penalize overly coefficients corresponding to categories of low frequency. As an example, consider 
\begin{equation*}
\bar{x}_1 = 1/2, \; \, \bar{x}_2 = 1/4, \; \,
\bar{x}_3 = \bar{x}_4 = 1/8, \qquad \beta_1 = 1, \; \, \beta_2 = -3, \; \, \beta_3 = 4, \;\, \beta_4 = -2, 
\end{equation*}   
in which case $\beta^{\T} \bar{x} = 0$ so that 
\begin{equation*}
\nnorm{\Pi^{\perp} X \beta}_2 = \sqrt{n/2   +  (n/4) \cdot 9 + (n/8) \cdot 16 + (n/8) \cdot4}.
\end{equation*}
Observe that here, the first indicator variable ($\beta_1$) is penalized equally to the fourth one ($\beta_4$) in accordance with how both terms affect the linear predictor $\eta$ in a mean square sense: the first term yields a change of $1$ for a fraction of $1/2$ of the total number of observations, while the fourth term yields a change of $4$ for a fraction of $1/8$. This is unlike the vanilla group lasso penalty that penalizes each term independently of the number of observations in the respective category. As a result, the performance of the latter tends to be poor in the presence of categories with low prevalence but strong effect, cf.~$\S$\ref{sec:simulations}. 

\subsubsection*{Standardization by column scaling}
It turns out that due to the simple structure of the matrix $\Pi^{\perp} X$, standardization can be performed without a QR decomposition or the SVD. %, using only simple scaling and translation. 
Let $S = \text{diag}(n_1^{1/2}, \ldots, n_L^{1/2})$ denote the matrix having the roots of the frequencies of the $L$ categories on its diagonal, and let $s = (n_1^{1/2}, \ldots, n_L^{1/2})^{\T}$. Then the matrix $\Pi^{\perp} X S^{-1}$ satisfies
\begin{align}
&\text{range}(\Pi^{\perp} X S^{-1}) = \text{range}(\Pi^{\perp} X) = \text{range}(U), \label{eq:equalrange}\\
&\nnorm{\Pi^{\perp} X S^{-1} v}_2^2 = \nnorm{v}_2^2 \; \; \, \forall v \; \text{such that} \, \scp{s}{v} = 0.  \label{eq:weightedsum}
\end{align}
While property \eqref{eq:equalrange} is obvious, property \eqref{eq:weightedsum} is immediate from the fact that $X S^{-1}$ has orthonormal columns and the observation that
\begin{equation}\label{eq:crucial_binary}
\bm{1}_n^{\T} X S^{-1} v = \bm{1}_L^{\T} S^2 S^{-1} v = \scp{s}{v} = 0.  
\end{equation}
% While property \eqref{eq:equalrange} is obvious, property \eqref{eq:weightedsum} is immediate from the Pythagorean theorem and the observation that
%\begin{equation*}
%\Pi^{\perp} X S^{-1} s = \Pi^{\perp} X \bm{1}_n = 0.  
%\end{equation*}
Next, we state and prove a lemma akin to Lemma 1 in \cite{LimHastie2015}. 
\begin{lemma}\label{lem:Lim_Hastie}
Consider the optimization problem  
\begin{align}\label{eq:grouplasso_centered_scaled}
 \min_{\beta_0, \beta} \su \mc{L}(y_i, \beta_0 + (\Pi^{\perp}X S^{-1} \beta)_i) + \lambda \nnorm{\beta}_2
\end{align}
Then any minimizer $(\wh{\beta}_0, \wh{\beta})$ of \eqref{eq:grouplasso_centered_scaled} satisfies $\nscp{s}{\wh{\beta}} = 0$.
\end{lemma}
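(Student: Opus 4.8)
The plan is to exploit the fact that the component of $\beta$ along the direction $s$ enters the fit only through a multiple of $\bm{1}_n$, which is annihilated by the centering operator $\Pi^{\perp}$; such a component therefore leaves the data-fitting term untouched while strictly inflating the penalty $\nnorm{\beta}_2$. A minimizer can thus carry no mass along $s$. Concretely, I would argue by contradiction: assuming a minimizer $(\wh{\beta}_0, \wh{\beta})$ with $\nscp{s}{\wh{\beta}} \neq 0$, I would construct a strictly better feasible point with the same intercept.

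The key algebraic observation to establish first is that $\Pi^{\perp} X S^{-1} s = 0$. Indeed, $S^{-1} s = \bm{1}_L$ by the definitions of $S$ and $s$, and since $X$ is a matrix of indicator variables with exactly one nonzero entry (equal to one) per row, $X \bm{1}_L = \bm{1}_n$. Hence $X S^{-1} s = \bm{1}_n \in \text{range}(\bm{1}_n)$, and applying $\Pi^{\perp}$ gives $\Pi^{\perp} X S^{-1} s = \Pi^{\perp} \bm{1}_n = 0$; this is the geometric counterpart of the identity \eqref{eq:crucial_binary}. I would also record that $\nnorm{s}_2^2 = \sum_{l=1}^L n_l = n$.

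With this in hand I would carry out the variational step. Setting $t = \nscp{s}{\wh{\beta}}/\nnorm{s}_2^2$ and decomposing $\wh{\beta} = \wh{\beta}_{\perp} + t\, s$ with $\wh{\beta}_{\perp} \perp s$, linearity together with the identity above yields
\[
\Pi^{\perp} X S^{-1} \wh{\beta} = \Pi^{\perp} X S^{-1} \wh{\beta}_{\perp} + t\, \Pi^{\perp} X S^{-1} s = \Pi^{\perp} X S^{-1} \wh{\beta}_{\perp},
\]
so the fitted values, and hence the entire loss term, agree for $(\wh{\beta}_0, \wh{\beta})$ and $(\wh{\beta}_0, \wh{\beta}_{\perp})$. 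On the other hand, orthogonality gives $\nnorm{\wh{\beta}}_2^2 = \nnorm{\wh{\beta}_{\perp}}_2^2 + t^2 \nnorm{s}_2^2$, which is strictly larger than $\nnorm{\wh{\beta}_{\perp}}_2^2$ whenever $t \neq 0$. For $\lambda > 0$ the point $(\wh{\beta}_0, \wh{\beta}_{\perp})$ therefore attains a strictly smaller objective, contradicting optimality of $(\wh{\beta}_0, \wh{\beta})$; hence $t = 0$, i.e. $\nscp{s}{\wh{\beta}} = 0$.

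The only genuine subtlety, which I would flag explicitly, is the role of $\lambda > 0$: when $\lambda = 0$ the $s$-direction of $\beta$ is truly unpenalized and unidentified, since it never enters the (centered) fit, so the conclusion holds only in the regularized regime, which is the relevant one here. Beyond this, the argument is elementary linear algebra, the crux being the single identity $\Pi^{\perp} X S^{-1} s = 0$ that lets the offending $s$-component be reabsorbed at no cost to the fit.
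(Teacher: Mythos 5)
Your proof is correct and takes essentially the same route as the paper's: both decompose $\wh{\beta}$ orthogonally into a component perpendicular to $s$ plus a multiple of $s$, observe that the latter lies in $\text{null}(\Pi^{\perp}X S^{-1})$ so the fitted values are unchanged while the penalty $\nnorm{\cdot}_2$ strictly increases, and conclude by optimality that the $s$-component vanishes. The only differences are cosmetic: you verify the key identity $\Pi^{\perp}X S^{-1}s = 0$ explicitly (via $S^{-1}s = \bm{1}_L$ and $X\bm{1}_L = \bm{1}_n$), whereas the paper simply asserts the null-space characterization, and you correctly flag the requirement $\lambda > 0$, which the paper leaves implicit.
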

\begin{proof}
Let $\wh{\beta} = \wt{\beta} + \delta$ with $0 \neq \delta \in \text{null}(\Pi^{\perp}X S^{-1}) = \{c \cdot s, \; c \in \R \}$ and $\nscp{\wt{\beta}}{s} = 0$, where $\text{null}(\cdot)$ denotes null space. We have
\begin{equation*}
  \nnorm{\wh{\beta}}_2 = \sqrt{\nnorm{\wt{\beta}}_2^2 +  \nnorm{\delta}_2^2} \geq  \nnorm{\wt{\beta}}_2, \quad \text{and} \;\,
  \Pi^{\perp}X S^{-1} \wh{\beta} = \Pi^{\perp}X S^{-1} \wt{\beta}.
\end{equation*}
Since $(\wh{\beta}_0, \wh{\beta})$ minimizes \eqref{eq:grouplasso_centered_scaled}, we must have $\wh{\beta} = \wt{\beta}$ and $\delta = 0$, which proves the claim.  
\end{proof}
\noindent In combination with observations \eqref{eq:equalrange} and \eqref{eq:weightedsum}, Lemma \ref{lem:Lim_Hastie} implies that the following optimization problems are equivalent:
\begin{align}\label{eq:chain_of_equivalences}
\begin{split}
&\min_{\beta_0, \beta} \su \mc{L}(y_i, \beta_0 + (\Pi^{\perp}X S^{-1} \beta)_i) + \lambda \nnorm{\Pi^{\perp} X S^{-1}\beta}_2, \\
&\min_{\beta_0, \beta} \su \mc{L}(y_i, \beta_0 + (\Pi^{\perp}X S^{-1} \beta)_i) + \lambda \nnorm{\beta}_2 \;\; \text{subject to} \; \scp{s}{\beta} = 0, \\
  &\min_{\beta_0, \beta} \su \mc{L}(y_i, \beta_0 + (\Pi^{\perp}X S^{-1} \beta)_i) + \lambda \nnorm{\beta}_2. 
\end{split}
\end{align}
Moreover, letting $(\wh{\alpha}_0, \wh{\alpha})$ denote the minimizer of \eqref{eq:fit_ortho_reg} using orthonormalization, we have  
\begin{equation*}
\wh{\alpha}_0 + U \wh{\alpha} = \wh{\beta}_0 + \Pi^{\perp}X S^{-1} \wh{\beta}. 
\end{equation*}
The only shortcoming of the solution $(\wh{\beta}_0, \wh{\beta})$ is limited interpretability since the constraint $\nscp{s}{\beta} = 0$ is
not as convenient as $\nscp{\bm{1}_L}{\beta} = 0$. However, there is a straightforward fix to this shortcoming: consider
\begin{equation}\label{eq:from_beta_to_theta}
\wh{\theta} = S^{-1} \wh{\beta} - \frac{\bm{1}_L^{\T} S^{-1} \wh{\beta}}{L} \bm{1}_L.
\end{equation}
Then $\bm{1}_L^{\T} \wh{\theta} = 0$ and 
\begin{equation*}
\Pi^{\perp} X \wh{\theta} =  \Pi^{\perp} X S^{-1} \wh{\beta},
\end{equation*}
hence we can work with the interpretable set of coefficients $\wh{\theta}$ without changing the fit. Moreover, $\wh{\theta}$ can be
transformed further to match a reference coding scheme, cf.~\eqref{eq:from_coding_to_reference}. Our suggested scheme is summarized
in Figure \ref{fig:standardizationscheme} as a computationally simpler alternative to standardization by orthonormalization. 

\begin{figure}
  \fbox{
\begin{minipage}{\textwidth}
    \begin{center}

\begin{enumerate}    
\item Compute the column sums $s$ of $X$
\item Compute the group lasso solution $(\wh{\beta}_0, \wh{\beta})$ with design matrix $\Pi^{\perp} X S^{-1}$\footnote{Note that the matrix
  $\Pi^{\perp} X$ does not have to be materialized.}
\item Modify $\wh{\beta}$ according to \eqref{eq:from_beta_to_theta}   
\end{enumerate}
\end{center}
\end{minipage}
}
\caption{Suggested standardization scheme for the group lasso with categorical predictors.}\label{fig:standardizationscheme}
\end{figure}

\noindent We finally remark that the reasoning of this paragraph also applies to the case of multiple categorical variables (cf.~formulations \eqref{eq:fit_constrained_grouplasso} and \eqref{eq:fit_coded_grouplasso}) at the level of each individual variable.

\subsubsection*{Handling Interaction terms}
It turns that the approach of the previous section naturally generalizes to interactions. To keep matters simple, we limit our presentation to first-order interactions; the same concepts can be used to obtain extensions to higher-order interactions. Consider two categorical variables with $L$ respectively $M$ levels, and let $X^{(1)} \in \R^{n \times L}$ and $X^{(2)} \in \R^{n \times M}$ be the corresponding indicator matrices for a sample of size $n$. Let further $X^{(1.2)} \in \R^{n \times (L \cdot M)}$ be the matrix of interactions terms obtained by entry-wise multiplication of each column of $X^{(1)}$ with each column of $X^{(2)}$. The linear predictor is given by
\begin{equation}\label{eq:interactionmodel}
\eta = [\bm{1}_n \;\,\, X^{(1)} \;\,\, X^{(2)} \;\,\, X^{(1.2)}] \;\, [\beta_0; \;\,\, \beta^{(1)}; \;\,\, \beta^{(2)}; \;\,\,  \beta^{(1.2)}],
\end{equation}
where
\begin{equation*}
\beta^{(1.2)} = \left(\beta_{11} \ldots \beta_{1M} \; \beta_{21} \ldots \beta_{2M} \; \ldots \, \ldots \beta_{L1} \ldots \beta_{LM} \right)^{\T}. 
\end{equation*}
For what follows, we shall assume that $X^{(1.2)}$ has rank $L \cdot M$. In order to ensure identifiability of the coefficients in \eqref{eq:interactionmodel}, we impose the constraints
\begin{align}\label{eq:interactionmodel_constraints}
\begin{split}  
  \textstyle\sum_{l = 1}^L \beta_l^{(1)} = 0, \quad \textstyle\sum_{m = 1}^M \beta_m^{(2)} =0, \quad &\textstyle\sum_{m = 1}^M \beta_{lm} = 0, \; 1 \leq l \leq L,\\
  &\textstyle\sum_{l = 1}^L \beta_{lm} = 0, \; 1 \leq m \leq M.
\end{split}
\end{align}
The associated group lasso problem is given by
\begin{align}\label{eq:grouplasso_interaction_plain}
\begin{split}  
  \min_{\beta_0, \, \beta^{(1)},\,  \beta^{(2)},  \,  \beta^{(1.2)}} \; \Bigg\{& \su \mc{L}(y_i, \beta_0 + (X^{(1)} \beta^{(1)})_i + (X^{(2)} \beta^{(2)})_i  + (X^{(1.2)} \beta^{(1.2)})_i)) \\
  &+ \lambda \{
  \textstyle \sqrt{\text{df}_1} \nnorm{\beta^{(1)}}_2 + \sqrt{\text{df}_2} \nnorm{\beta^{(2)}}_2 +  \sqrt{\text{df}_{1.2}} \nnorm{\beta^{(1.2)}}_2 \Bigg\} \\
  &\text{subject to} \; \eqref{eq:interactionmodel_constraints}
\end{split}
\end{align}
with degrees of freedom $\text{df}_1 = L-1$, $\text{df}_2 = M-1$, $\text{df}_{1.2} = L\cdot M - L - M + 1$. Let $\Pi^{\perp}$ be as above, and let further
$\texttt{P}^{\perp}$ denote the projection on the orthogonal complement of $\text{range}(\bm{1}_n) + \text{range}(X^{(1)}) + \text{range}(X^{(2)})$. Moreover, let
\begin{alignat*}{2}
  &S^{(1.2)} &&= \text{diag}(n_{11}^{1/2}, \ldots, n_{1M}^{1/2}, n_{21}^{1/2}, \ldots, n_{2M}^{1/2}, \ldots, \ldots, n_{LM}^{1/2}), \\[1ex]
  &S^{(1)} &&=  \text{diag}(n_{1\bullet}^{1/2},\ldots, n_{L\bullet}^{1/2}), \quad \; \qquad  S^{(2)} =   \text{diag}(n_{\bullet 1}^{1/2},\ldots, n_{\bullet M}^{1/2}),\\
  &n_{l\bullet} &&\coloneq \textstyle\sum_{m = 1}^M n_{lm}, \; 1 \leq l \leq L,   \qquad n_{\bullet m} \coloneq \textstyle\sum_{l = 1}^L n_{lm}, \; 1 \leq m \leq M, 
\end{alignat*}
where $n_{lm}$ denotes the frequency of observations with the first categorical variable being equal to $l$ and the second categorical variable being equal to
$m$, $l=1,\ldots,L$, $m=1,\ldots,M$. 
% $S^{(1)} = \text{diag}(\{n_1^{(1)}\}^{1/2}, \ldots, \{ n_L^{(1)}\}^{1/2})$,  $S^{(2)} = \text{diag}(\{n_1^{(2)}\}^{1/2}, \ldots, \{n_M^{(2)}\}^{1/2})$ 
%and $$ the square roots of the marginal and joint frequencies of the categories.
We consider the following standardized formulation of \eqref{eq:grouplasso_interaction_plain}:
\begin{align}\label{eq:grouplasso_interaction_standardized}
\begin{split}  
  \min_{\beta_0, \, \beta^{(1)},\,  \beta^{(2)},  \,  \beta^{(1.2)}} \; \Bigg\{& \su \mc{L} \Big(y_i, \beta_0 + (\Pi^{\perp} X^{(1)} [S^{(1)}]^{-1} \beta^{(1)})_i
  + (\Pi^{\perp} X^{(2)} [S^{(2)}]^{-1} \beta^{(2)})_i\\
  &\;\;\qquad + (\texttt{P}^{\perp} X^{(1.2)} [S^{(1.2)}]^{-1} \beta^{(1.2)})_i) \Big) \\
  &\;\;\qquad + \lambda \{
  \textstyle \sqrt{\text{df}_1} \nnorm{\beta^{(1)}}_2 + \sqrt{\text{df}_2} \nnorm{\beta^{(2)}}_2 +  \sqrt{\text{df}_{1.2}} \nnorm{\beta^{(1.2)}}_2 \Bigg\}
\end{split}
\end{align}
The properties of this formulation are summarized in the following result.
\begin{theo}\label{theo:interaction}
Consider optimization problem \eqref{eq:grouplasso_interaction_plain}. Then any minimizer {\small $(\wh{\beta}_0, \, \wh{\beta}^{(1)},\,  \wh{\beta}^{(2)},  \,  \wh{\beta}^{(1.2)})$}
satisfies
\begin{align}%\label{eq:standardization_condition}
  &\nnorm{\Pi^{\perp} X^{(1)} [S^{(1)}]^{-1} \wh{\beta}^{(1)}}_2 = \nnorm{\wh{\beta}^{(1)}}_2, \qquad \nnorm{\Pi^{\perp} X^{(2)} [S^{(2)}]^{-1} \wh{\beta}^{(2)}}_2 = \nnorm{\wh{\beta}^{(2)}}_2,  \label{eq:standardization_condition_marginal} \\[1ex] 
  &\nnorm{\emph{\texttt{P}}^{\perp} X^{(1.2)} [S^{(1.2)}]^{-1} \wh{\beta}^{(1.2)}}_2 = \nnorm{\wh{\beta}^{(1.2)}}_2 \label{eq:standardization_condition_interaction}.
\end{align}
Moreover, denote by $\{ e_j \}_{j = 1}^{M \cdot L}$ the canonical basis vectors of $\R^{M \cdot L}$, and let
\begin{align*}
&\mc{N}^{(1)} = \big\{\textstyle\sum_{m = 1}^M e_m, \textstyle\sum_{m = M+1}^{2M} e_m, \ldots, \textstyle\sum_{m = (L-1) \cdot M +1}^{L \cdot M} e_m \big\}, \\[1ex]
&\mc{N}^{(2)} = \big\{\textstyle\sum_{l = 1}^L e_{1 + (l-1) \cdot M}, \textstyle\sum_{l = 1}^{L} e_{2 + (l-1) \cdot M}, \ldots,  \textstyle\sum_{l = 1}^{L} e_{M + (l-1) \cdot M} \big\},
\end{align*}  
% Then any minimizer $(\wh{\beta}_0, \wh{\beta})$ of \eqref{eq:grouplasso_centered_scaled} satisfies $\nscp{s}{\wh{\beta}} = 0$.
\end{theo}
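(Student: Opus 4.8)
The plan is to argue block by block exactly as in Lemma~\ref{lem:Lim_Hastie}, applied to the standardized formulation \eqref{eq:grouplasso_interaction_standardized} whose minimizer carries the identities in question. Each of the three coefficient vectors enters the data-fit term only through the corresponding scaled-and-projected design block, so any component of $\wh{\beta}^{(\cdot)}$ lying in the null space of that block leaves the fit unchanged while it can only increase the penalty $\nnorm{\beta^{(\cdot)}}_2$; optimality therefore forces each $\wh{\beta}^{(\cdot)}$ to be orthogonal to the null space of its design block. Granting this, \eqref{eq:standardization_condition_marginal} and \eqref{eq:standardization_condition_interaction} reduce to the facts that $X^{(1)}[S^{(1)}]^{-1}$, $X^{(2)}[S^{(2)}]^{-1}$ and $X^{(1.2)}[S^{(1.2)}]^{-1}$ all have orthonormal columns, and that on the orthogonal complement of the respective null space the projectors $\Pi^{\perp}$ and $\texttt{P}^{\perp}$ act as the identity on the image.

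\textbf{Marginal blocks.} First I would dispose of the two marginal blocks, which are nothing but the single-variable setting already treated. Exactly as in \eqref{eq:crucial_binary}, $\bm{1}_n^{\T} X^{(1)}[S^{(1)}]^{-1} = (s^{(1)})^{\T}$ with $s^{(1)} = (n_{1\bullet}^{1/2},\ldots,n_{L\bullet}^{1/2})^{\T}$, so $\text{null}(\Pi^{\perp} X^{(1)}[S^{(1)}]^{-1}) = \text{span}(s^{(1)})$; the Lemma~\ref{lem:Lim_Hastie} argument gives $\nscp{s^{(1)}}{\wh{\beta}^{(1)}} = 0$, whence $\Pi^{\perp} X^{(1)}[S^{(1)}]^{-1}\wh{\beta}^{(1)} = X^{(1)}[S^{(1)}]^{-1}\wh{\beta}^{(1)}$ and the first identity in \eqref{eq:standardization_condition_marginal} follows from the analog of \eqref{eq:weightedsum}. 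The second marginal identity is identical with $s^{(2)} = (n_{\bullet 1}^{1/2},\ldots,n_{\bullet M}^{1/2})^{\T}$.

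\textbf{Interaction block.} The real work is the interaction block. I would first record that $X^{(1.2)}[S^{(1.2)}]^{-1}$ has orthonormal columns (the interaction indicators have pairwise disjoint supports and squared column norms $n_{lm}$), together with the key structural fact $\mc{V} := \text{range}(\bm{1}_n) + \text{range}(X^{(1)}) + \text{range}(X^{(2)}) \subseteq \text{range}(X^{(1.2)})$: summing the interaction columns over $m$ within a fixed $l$ reproduces the $l$-th column of $X^{(1)}$, over $l$ within a fixed $m$ the $m$-th column of $X^{(2)}$, and over all cells the vector $\bm{1}_n$. Consequently $\text{null}(\texttt{P}^{\perp} X^{(1.2)}[S^{(1.2)}]^{-1}) = \{v : X^{(1.2)}[S^{(1.2)}]^{-1}v \in \mc{V}\}$, a space of dimension $\dim\mc{V} = L+M-1$ (since $X^{(1.2)}[S^{(1.2)}]^{-1}$ is injective and $\mc{V}\subseteq\text{range}(X^{(1.2)})$). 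I then claim this null space equals $\text{span}\{S^{(1.2)} w : w \in \mc{N}^{(1)} \cup \mc{N}^{(2)}\}$: for a row generator $w = \sum_{m=(l-1)M+1}^{lM} e_m \in \mc{N}^{(1)}$ one has $X^{(1.2)}[S^{(1.2)}]^{-1}(S^{(1.2)}w) = X^{(1.2)}w$, which is precisely the $l$-th column of $X^{(1)}$ and hence lies in $\mc{V}$, and symmetrically for the column generators in $\mc{N}^{(2)}$. Since $S^{(1.2)}$ is invertible and the generating set has $L+M-1$ independent elements (the single relation being that the sum of all row generators equals the sum of all column generators equals $\bm{1}_{LM}$), a dimension count upgrades this inclusion to an equality. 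The Lemma~\ref{lem:Lim_Hastie} argument then yields $\wh{\beta}^{(1.2)} \perp \text{null}(\texttt{P}^{\perp} X^{(1.2)}[S^{(1.2)}]^{-1})$, and \eqref{eq:standardization_condition_interaction} follows by orthonormality.

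\textbf{Main obstacle.} The step I expect to be most delicate is verifying that $\texttt{P}^{\perp}$ acts trivially on $X^{(1.2)}[S^{(1.2)}]^{-1}\wh{\beta}^{(1.2)}$ once $\wh{\beta}^{(1.2)}$ is orthogonal to the interaction null space. Writing $W := X^{(1.2)}[S^{(1.2)}]^{-1}$, this amounts to $Wv \perp \mc{V}$ for every $v \perp \text{null}(\texttt{P}^{\perp} W)$; the clean way to see it is $\nscp{Wv}{Wz} = \nscp{v}{z}$ (orthonormal columns, $W^{\T} W = I$) combined with $\mc{V} \subseteq \text{range}(W)$, so that any $w \in \mc{V}$ can be written $w = Wz$ with $z \in \text{null}(\texttt{P}^{\perp} W)$, giving $\nscp{Wv}{w} = \nscp{v}{z} = 0$. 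The remaining care is bookkeeping: the rank-$LM$ assumption on $X^{(1.2)}$ (all cells occupied) is exactly what makes $\mc{V} \subseteq \text{range}(X^{(1.2)})$ hold and forces $\dim\mc{V} = L+M-1$, matching the count of independent generators $S^{(1.2)}\mc{N}^{(1)} \cup S^{(1.2)}\mc{N}^{(2)}$ needed to close the null-space identification.
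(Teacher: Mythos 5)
Your proof is correct and follows essentially the same route as the paper's: a block-wise application of the Lemma~\ref{lem:Lim_Hastie} optimality argument, identification of $\text{null}(\texttt{P}^{\perp} X^{(1.2)} [S^{(1.2)}]^{-1})$ with $S^{(1.2)}\,\text{range}(\mc{N})$, and the orthonormality of the columns of the scaled indicator matrices. The only differences are presentational: you actually prove the null-space identity (inclusion plus a dimension count) where the paper asserts it ``by construction'', and you replace the paper's explicit computation of $[X^{(1)}]^{\T} X^{(1.2)} [S^{(1.2)}]^{-1}$ and $[X^{(2)}]^{\T} X^{(1.2)} [S^{(1.2)}]^{-1}$ by the abstract argument via $W^{\T}W = I$ and $\mc{V} \subseteq \text{range}(W)$.
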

\noindent and $\mc{N} = \mc{N}^{(1)} \cup \mc{N}^{(2)}$. Consider $\wh{\theta}^{(1.2)} = \texttt{P}_{\mc{N}}^{\perp} [S^{(1.2)}]^{-1} \wh{\beta}^{(1.2)}$, where
$\texttt{P}_{\mc{N}}^{\perp}$ denotes the orthogonal projection on $\text{range}(\mc{N})^{\perp}$. Then
\begin{align}
 &\texttt{P}^{\perp} X^{(1.2)} [S^{(1.2)}]^{-1} \wh{\beta}^{(1.2)} = \texttt{P}^{\perp} X^{(1.2)} \wh{\theta}^{(1.2)}, \label{eq:nochange_in_fit} \\
 &\textstyle\sum_{m = 1}^M \wh{\theta}_{lm} = 0, \; 1 \leq l \leq L, \quad \textstyle\sum_{l = 1}^L \wh{\theta}_{lm} = 0, \; 1 \leq m \leq M.  \label{eq:sumtooneconstraints_satisfied}  
\end{align}
\vskip1ex
\noindent A proof of Theorem \ref{theo:interaction} is provided in the appendix.
\vskip2ex
\noindent The implications of Theorem \ref{theo:interaction} for group lasso standardization for categorical variables and first-order interactions are summarized in
Figure \ref{fig:standardizationscheme_interaction}; the last step restores the usual ANOVA-type interpretation of the coefficients. The main benefit of the proposed scheme is that it does not require orthonormalization of the $L \cdot M$ matrix
$\texttt{P}^{\perp} X^{(1.2)}$. Instead, computation of the two projectors $\texttt{P}^{\perp}$ and $\texttt{P}_{\mc{N}}^{\perp}$ scales with $M + L$ rather than $M \cdot L$.

\begin{figure}
  \fbox{
\begin{minipage}{\textwidth}
    \begin{center}

\begin{enumerate}    
\item Cross-tabulate the two categorical variables to obtain frequency counts $n_{11}, \ldots, n_{LM}$.
\item Obtain $\texttt{P}^{\perp}$ from an SVD of $[X^{(1)} \; X^{(2)}]$, and obtain $\texttt{P}_{\mc{N}}^{\perp}$ by orthonormalizing $\mc{N}$. 
\item Compute the group lasso solution \eqref{eq:grouplasso_interaction_plain}.
\item Compute $\wh{\theta}^{(1)} = [S^{(1)}]^{-1} \wh{\beta}^{(1)} - \frac{\bm{1}_L^{\T} [S^{(1)}]^{-1} \wh{\beta}^{(1)}}{L} \bm{1}_L$,
              $\wh{\theta}^{(2)} = [S^{(2)}]^{-1} \wh{\beta}^{(2)} - \frac{\bm{1}_M^{\T} [S^{(2)}]^{-1} \wh{\beta}^{(2)}}{M} \bm{1}_M$  and $\wh{\theta}^{(1.2)} = \texttt{P}_{\mc{N}}^{\perp} [S^{(1.2)}]^{-1} \wh{\beta}^{(1.2)}$.   
\end{enumerate}
\end{center}
\end{minipage}
}
\caption{Suggested standardization scheme for the group lasso with categorical predictors and first-order interaction.}\label{fig:standardizationscheme_interaction}
\end{figure}

\subsubsection*{Standardization for the Sparse Group Lasso} 
Simon et al.~\cite{Simon2012b} consider an extension of the group lasso called sparse group lasso which allows for sparsity both at the group level and within each group. For a single categorical variable with sum-to-zero constraint, the optimization problem reads 
\begin{equation*}
\min_{\beta_0, \beta} \su \mc{L}(y_i, \beta_0 + (X \beta)_i) + \lambda (\tau \nnorm{\beta}_2  + (1 - \tau) \nnorm{\beta}_1 ) \quad \text{subject to} \; \bm{1}_{L}^{\T} \beta = 0. 
\end{equation*}
for a second tuning parameter $\tau \in [0,1]$. Regarding standardization, we note that the presence of the $\ell_1$ penalty breaks the chain of equivalences in \eqref{eq:chain_of_equivalences}. Specifically, for the sparse group lasso analog to \eqref{eq:grouplasso_centered_scaled}     
\begin{align}\label{eq:sparsegrouplasso_centered_scaled}
 \min_{\beta_0, \beta} \su \mc{L}(y_i, \beta_0 + (\Pi^{\perp}X S^{-1} \beta)_i) + \lambda (\tau \nnorm{\beta}_2 + (1- \tau) \nnorm{\beta}_1 )
\end{align}
there is no counterpart to Lemma \ref{lem:Lim_Hastie}, i.e., the minimizer $\wh{\beta}$ of \eqref{eq:sparsegrouplasso_centered_scaled} no longer satisfies 
$\nscp{s}{\wh{\beta}} = 0$ so that it does not hold that 
$\nnorm{\Pi^{\perp}X S^{-1} \wh{\beta}}_2 = \nnorm{\wh{\beta}}_2$. In our simulations, however, the solution of \eqref{eq:sparsegrouplasso_centered_scaled} and the solution of   
\begin{align}\label{eq:sparsegrouplasso_centered_scaled_proper}
 \min_{\beta_0, \beta} \su \mc{L}(y_i, \beta_0 + (\Pi^{\perp}X S^{-1} \beta)_i) + \lambda (\tau \nnorm{\Pi^{\perp} X S^{-1} \beta}_2 + (1- \tau) \nnorm{\beta}_1 ),
\end{align}
exhibited similar performance with regard to estimation and prediction. While \eqref{eq:sparsegrouplasso_centered_scaled_proper} is recommended from the perspective of proper standardization, formulation \eqref{eq:sparsegrouplasso_centered_scaled} appears to be more convenient for optimization.    

\section{Numerical Results}\label{sec:simulations}
We here present the results of simulations on least squares regression, i.e., $\mc{L}(z, z') = (z - z')^2/n$ that underline the importance of standardization when using the group lasso penalty for
categorical predictors in an unbalanced setting, i.e., when the frequencies of the levels exhibit strong variation. We consider
two sets of simulations, one concerning sparsity at the group level ("ordinary" group lasso), and one concerning sparsity at
both the group level and within each group (sparse group lasso). For the former, standardization as outlined in the previous
section yields a noticeable reduction of the estimation error; that effect vanishes as the level-wise frequencies become balanced.
For the sparse group lasso, we compare three different ways of standardization with varying degrees of computational ease from an optimization perspective. We
find that simple column scaling as advocated above achieves similar performance as a more sophisticated method of standardization.

% 50 replications

% 1000 observations
% 200 extra for $\lambda selection$
% $p = 10$
% $L_1 = \ldots = L_p = L = 10$ 10 levels per group
% $\sigma = 0.2$ noise level

% $\frac{n_l^{(j)}}{n} \propto t^l$, $l = 1,\ldots,10$

% categories assigned uniformly at random 

% $\lambda = 2^{\kappa}, \kappa \in \{-4,-3.5,\ldots,5\}$

% $\lambda_{\textsf{w}} = \lambda_{\texttt{w/o}}/\sqrt{n}$
% $\lambda_{\textsf{w/o}} = \sigma \cdot 2^{\kappa} \cdot (\sqrt{L/n} + \sqrt{\log(p) / n})$
% $\beta^*$ restricted to support from sphere, then centered, and re-normalized
\vspace*{-1ex}
% after estimation, back-transformation
\subsubsection*{Setting 1: ordinary group lasso}
\vspace*{-.5ex}
Data is generated as
\begin{equation}\label{eq:setup_grouplasso}
y = \sum_{j = 1}^{10} X^{(j)} \beta_*^{(j)} + 0.2 \cdot \epss, \qquad \epss \sim N(0, I),
\end{equation}
where $\beta_*^{(j)} \sim u$ if $1 \leq j \leq s$, and $\beta_*^{(j)} \equiv 0$ otherwise, where
$u$ is drawn from a $N(0, I_L)$-distribution ($L = 10$), then centered and scaled to unit $\ell_2$ norm;
here, $s \in \{1,2,5\}$ denotes the sparsity level. The $\{ X^{(j)} \}_{j = 1}^{10}$ equal the indicator
matrices associated with each of the $p = 10$ categorical variables. We generate a training set of
size $1,000$ and a validation set of size $200$ according to the above model. For each categorical variable,
the frequency of level $l$ is proportional to $t^l$, $l = 1,\ldots,L$, where $t \in \{2/3, 0.95\}$. Small
$t$ yields an unbalanced setting with strong heterogeneity across level-wise frequencies, whereas $t = 1$ corresponds
to the perfectly balanced case. The table of the absolute frequencies of the ten levels in the training set are provided in Table \ref{tab:frequencies}. Assignment of individual observations to the $L$ levels is done uniformly at random, separately for each categorical variable.  
\begin{table}
\begin{center}
  \begin{tabular}{|c|cccccccccc|}
    \hline
  & $n_1$ & $n_2$  & $n_3$ & $n_4$ & $n_5$ & $n_6$ & $n_7$ & $n_8$ & $n_9$ & $n_{10}$  \\
  \hline
  $t = 2/3$   & 339   & 226   & 151 &  100  & 67  & 45  & 30  & 20  & 13  & 9 \\
  \hline
  $t = 0.95$   & 125   & 118   & 112 &  107  & 101  & 96  & 92  & 87 & 83  & 79 \\
  \hline
\end{tabular}
\end{center}
\vspace*{-1.25ex}
\caption{Absolute frequencies $\{ n_l \}_{l = 1}^{10}$ in a highly unbalanced ($t = 2/3$) and nearly balanced ($t = 0.95$) setting.}\label{tab:frequencies}
\end{table}
We compare the group lasso with (i) only centering of the $\{ X^{(j)} \}_{j = 1}^p$ and with (ii) proper standardization (cf.~Figure \ref{fig:standardizationscheme} and \eqref{eq:grouplasso_centered_scaled}) with regularization parameter chosen as to minimize the mean squared prediction error on the validation set via a grid search over $\Lambda_{\textsf{w/o}} =
\sigma \cdot 2^{\kappa} \cdot (\sqrt{L/n} + \sqrt{\log(p) / n})$, $\kappa \in \{-4,-3.5,\ldots,5\}$ and $\Lambda_{\textsf{w/}} = n^{-1/2} \Lambda_{\textsf{w/o}}$ without (\textsf{w/o})
and with (\textsf{w/}) standardization, respectively. The choice of the grids follows theoretical results on the group lasso in \cite{Negahban2009}, p.~553; the
re-scaling by the factor $n^{-1/2}$ in $\Lambda_{\textsf{w/}}$ results from the fact that the results in \cite{Negahban2009} presume that each column in the design matrix
has norm of the order of $\sqrt{n}$, whereas after scaling the column norms are equal to one. Figure \ref{fig:grouplasso_results} shows boxplots of the estimation errors
$\nnorm{\wh{\beta} - \beta_*}_2$, $\beta_* = [\beta_*^{(1)}; \ldots ; \beta_*^{(p)}]$ based on 50 independent replications; in each replication, all components of \eqref{eq:setup_grouplasso} are re-generated. 
\begin{figure}[h!]
\mbox{\includegraphics[width = 0.32\textwidth]{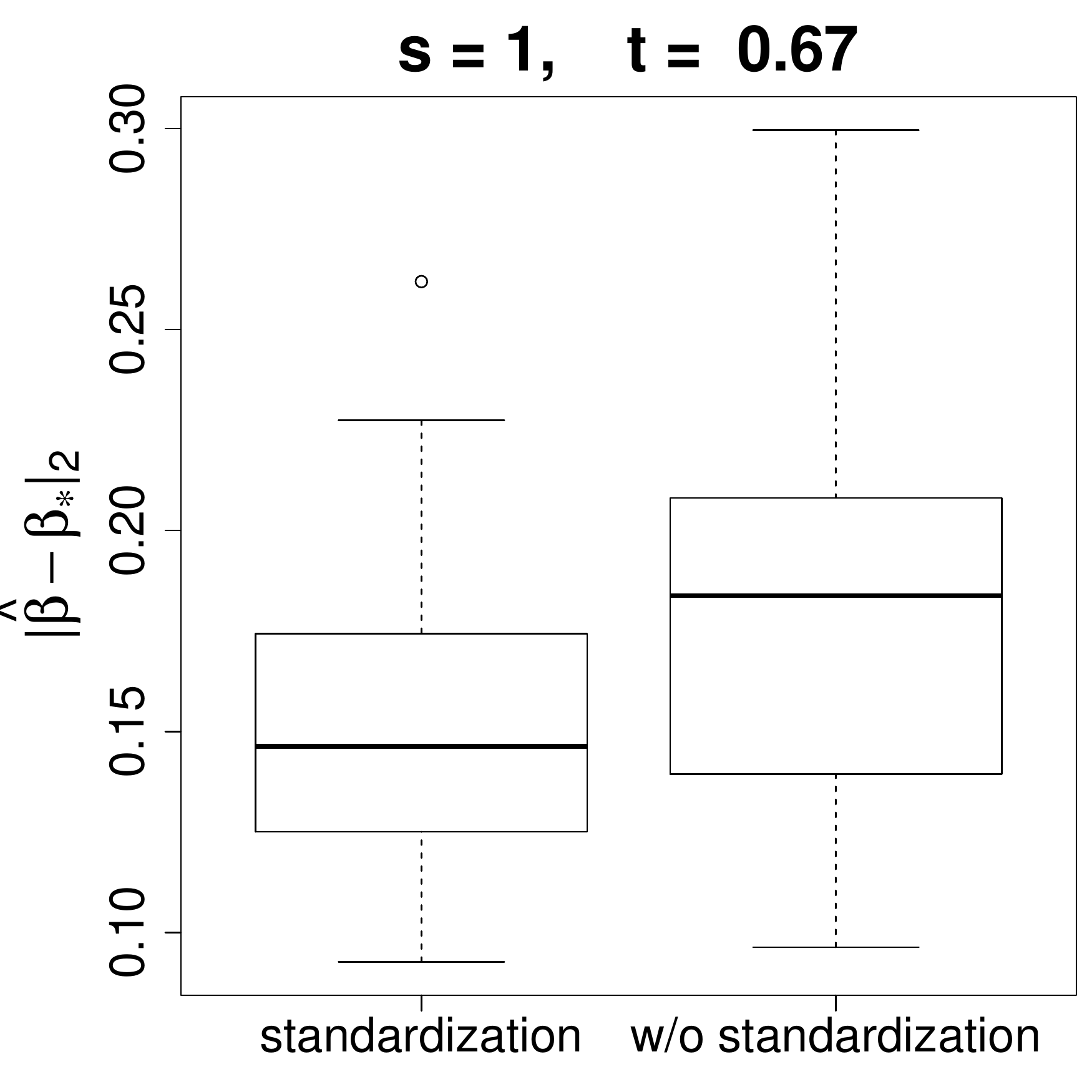} \includegraphics[width = 0.32\textwidth]{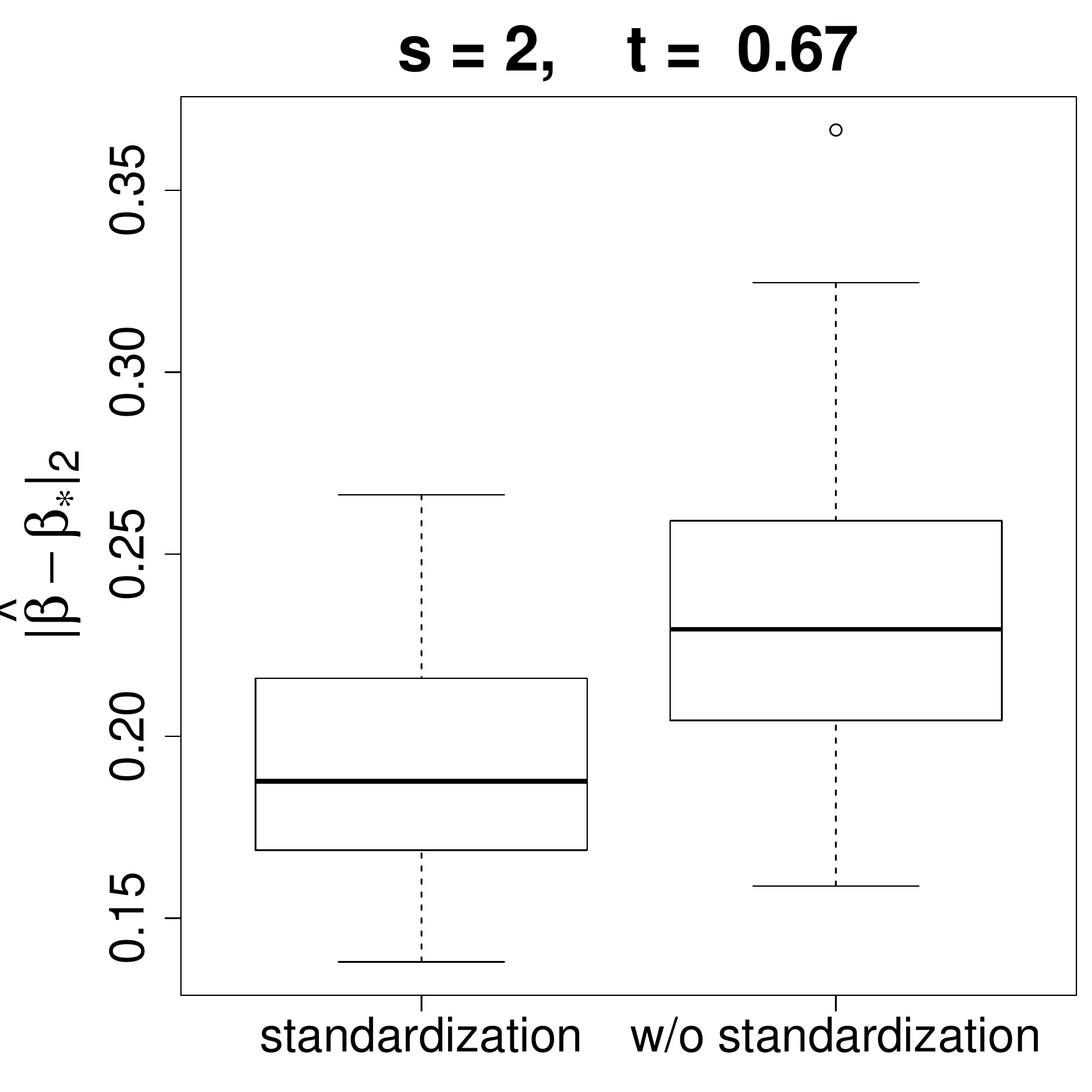} 

\includegraphics[width = 0.32\textwidth]{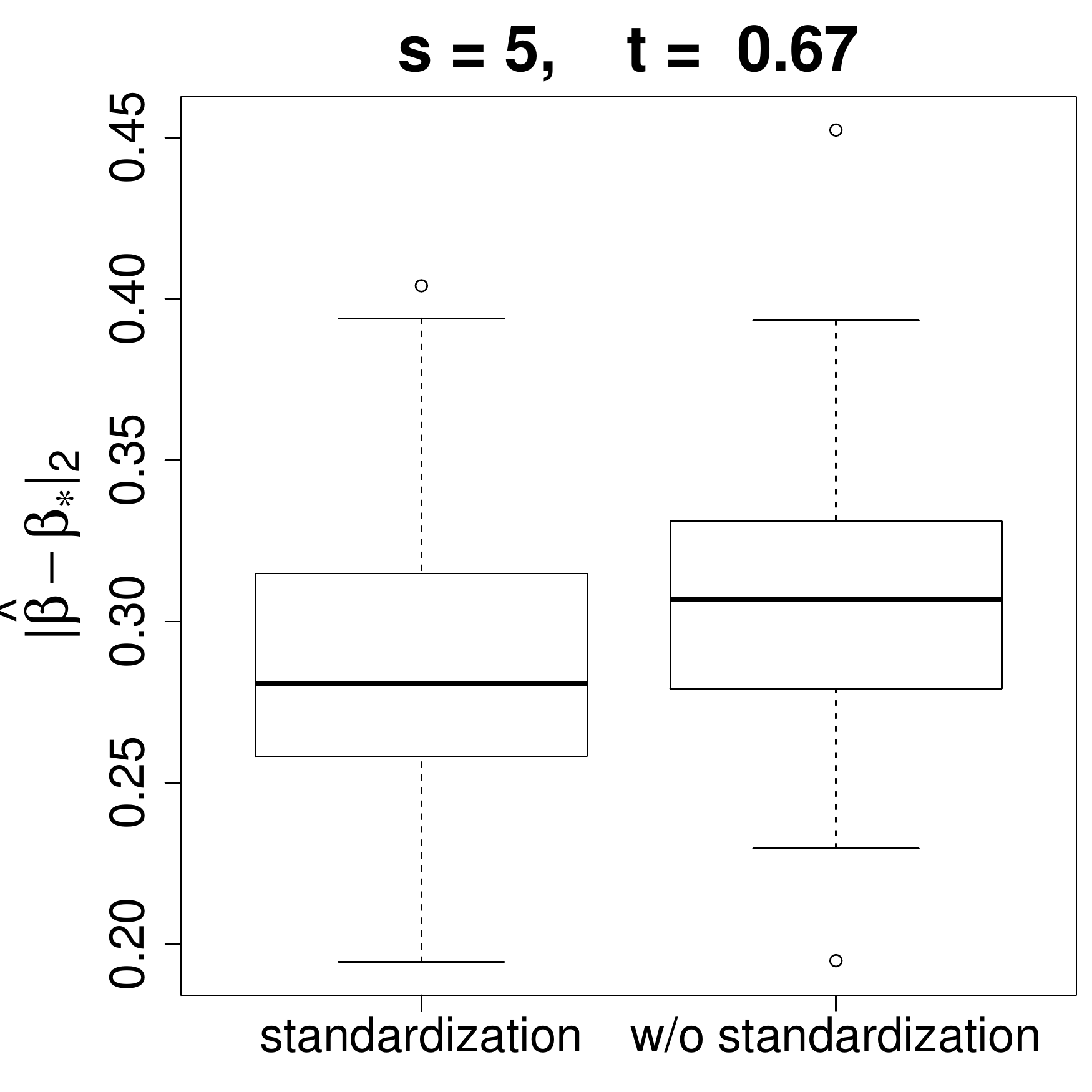}}

\mbox{\includegraphics[width = 0.32\textwidth]{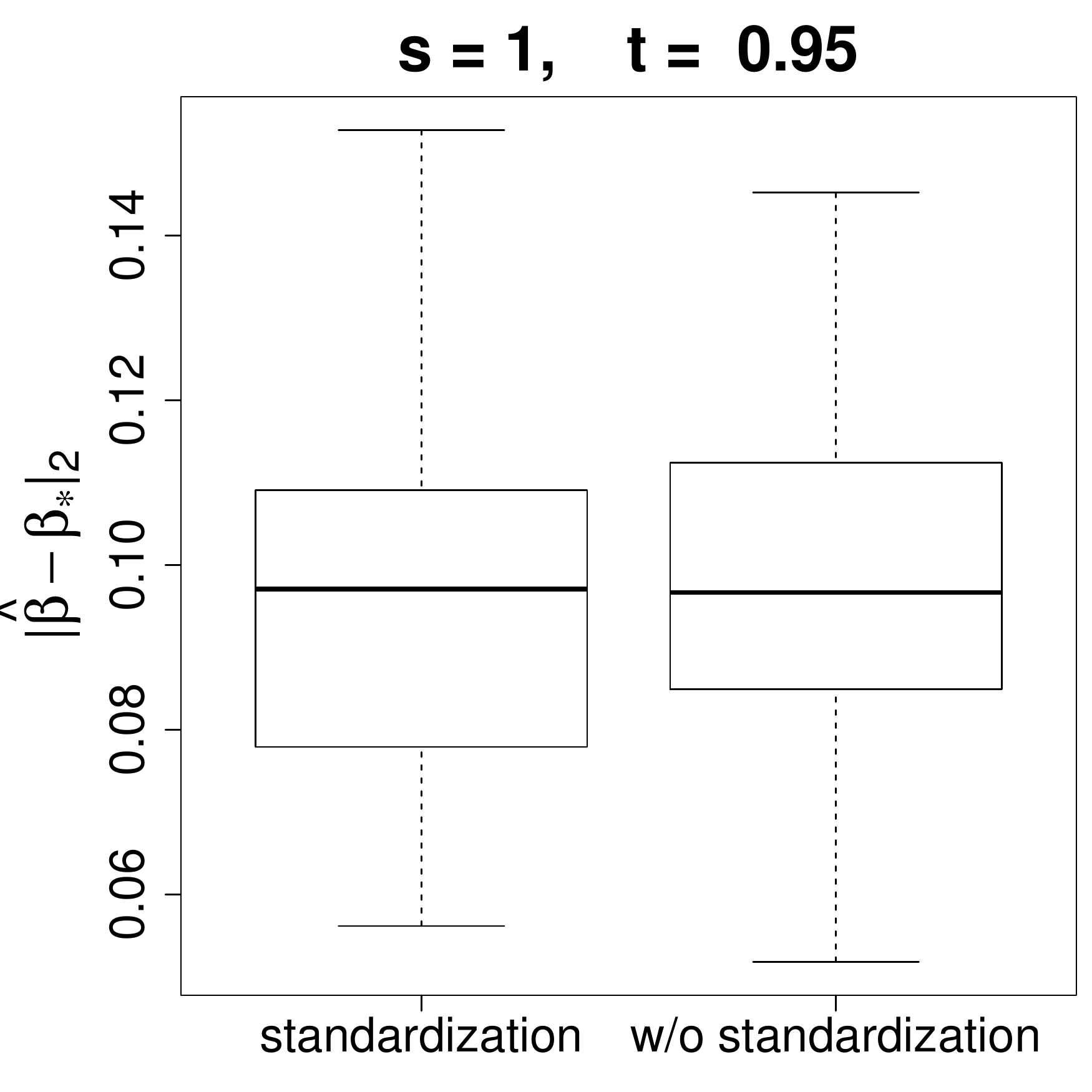} \includegraphics[width = 0.32\textwidth]{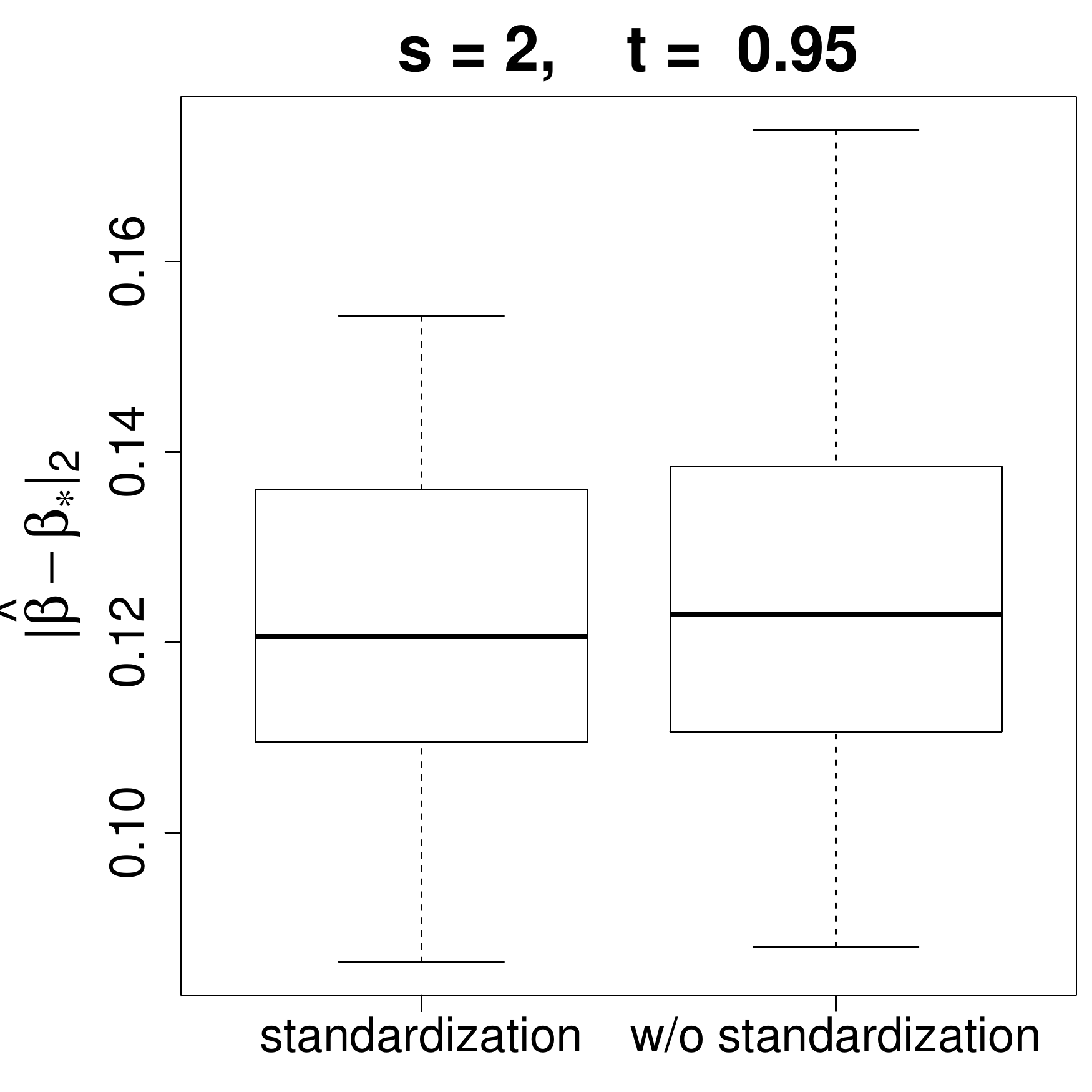} 

\includegraphics[width = 0.32\textwidth]{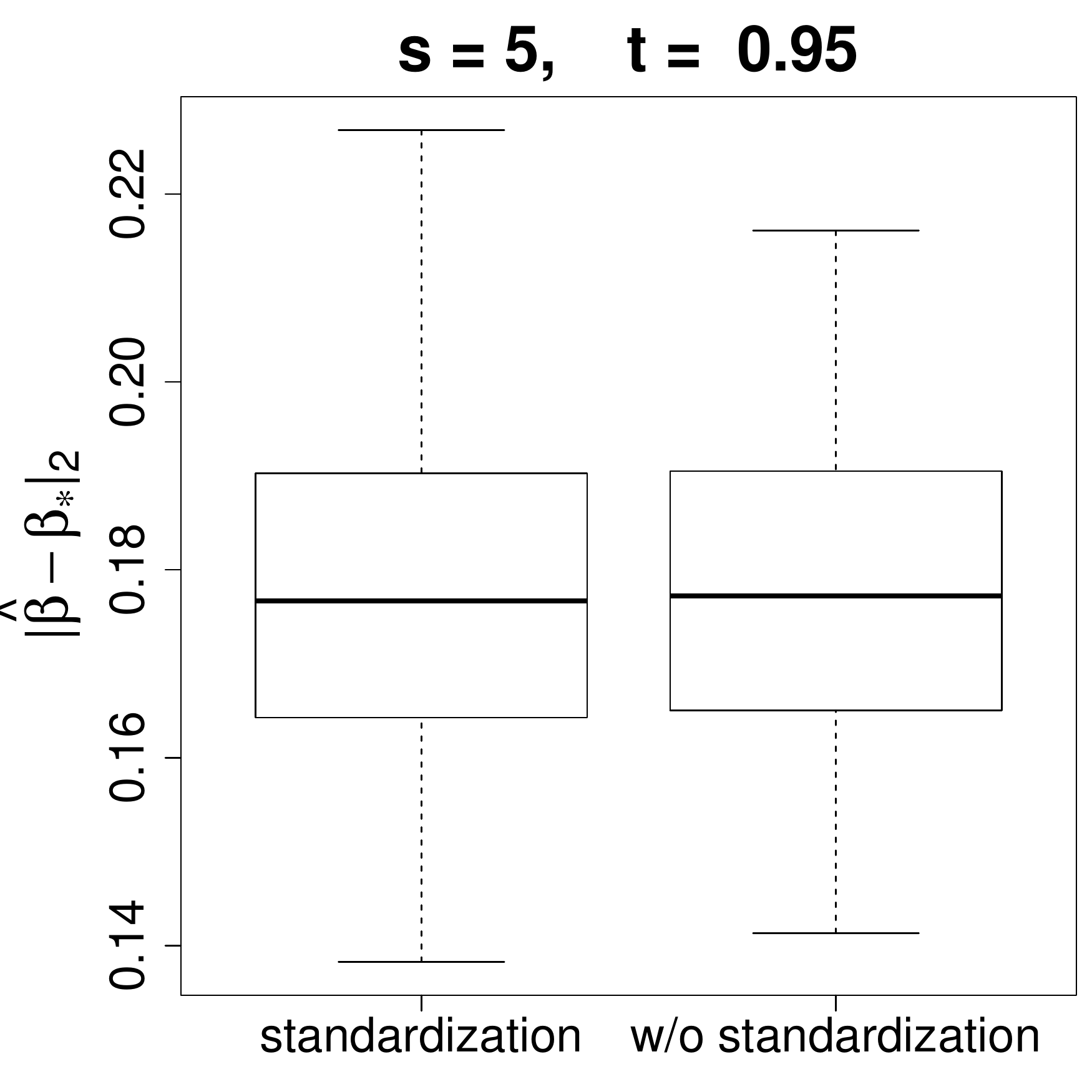}} 
% \mbox{\includegraphics[width = 0.32\textwidth]{..//nonsparse_s2_t067} \includegraphics[width = 0.24\textwidth]{../experiments_standardization/nonsparse_s2_t095}}}
\vspace*{-1.25ex}
\caption{Boxplots of the estimation errors $\nnorm{\hat{\beta} - \beta_*}_2$ for Setting 1 over $50$ independent replications. The headers
indicate the sparsity level $s$ and the frequency distribution}\label{fig:grouplasso_results}
\end{figure} 

% \begin{equation*}

% \end{equation*}

\subsubsection*{Setting 2: sparse group lasso}
Data generation remains unchanged except for the generation of the $\beta_*^{(j)}$, $1 \leq j \leq s$, which are sparsified
simply by selecting their supports as three elements of $\{1,\ldots,L=10\}$ uniformly at random. We then compare the three
following approaches of fitting a sparse group lasso model based on the resulting data.
{\small
\begin{align}
&\text{\underline{scaling}}: \notag\\  
  &\min_{\beta_0, \{ \beta^{(j)} \}_{j = 1}^{10}} \su \mc{L}(y_i, \beta_0  +  \textstyle{\sum_{j=1}^{10}  (\Pi^{\perp} X^{(j)} [S^{(j)}]^{-1} \beta^{(j)})_i}) + \displaystyle\sum_{j = 1}^{10} \{\tau \lambda \nnorm{\beta^{(j)}}_2 +
    (1 - \tau) \lambda^{\textsf{lasso}} \nnorm{\beta^{(j)}}_1 \} \label{eq:scaling_sparsegrouplasso}
\end{align}}
{\small
\begin{align}
&\text{\underline{SVD}}: \notag \\  
  &\min_{\beta_0, \{ \beta^{(j)} \}_{j = 1}^{10}} \su \mc{L}(y_i, \beta_0  +  \textstyle{\sum_{j=1}^{10}  (\Pi^{\perp} X^{(j)} \beta^{(j)})_i}) + \displaystyle\sum_{j = 1}^{10} \{\tau \lambda \nnorm{\Pi^{\perp} X^{(j)} \beta^{(j)}}_2 +
    (1 - \tau) \lambda^{\textsf{lasso}} \nnorm{\beta^{(j)}}_1 \} \label{eq:SVD_sparsegrouplasso}
\end{align}}
\noindent Here, the designation "SVD" refers to the fact that the above optimization is equivalent to replacing $\Pi^{\perp} X^{(j)}$ by the associated matrix of left singular vectors corresponding
to the non-zero singular values, and re-expressing $\beta^{(j)}$ accordingly.
{\small
\begin{align}
&\text{\underline{SVD + scaling}}: \notag \\  
  &\min_{\beta_0, \{ \beta^{(j)} \}_{j = 1}^{10}} \su \mc{L}(y_i, \beta_0  +  \textstyle{\sum_{j=1}^{10}  (\Pi^{\perp} X^{(j)} [S^{(j)}]^{-1} \beta^{(j)})_i}) + \qquad \qquad \qquad \qquad \qquad \qquad \qquad \qquad \;\;\,\quad \notag \\
&\qquad \qquad \qquad \qquad + \displaystyle\sum_{j = 1}^{10} \{\tau \lambda \nnorm{\Pi^{\perp} X^{(j)} [S^{(j)}]^{-1} \beta^{(j)}}_2 +
(1 - \tau) \lambda^{\textsf{lasso}} \nnorm{\beta^{(j)}}_1 \} \label{eq:SVDs_sparsegrouplasso}
\end{align}}
In the same way as \eqref{eq:SVD_sparsegrouplasso} is equivalent to the use of an SVD as explained above, \eqref{eq:SVDs_sparsegrouplasso} is equivalent to working with SVDs of the column-scaled matrices $\{ \Pi^{\perp} X^{(j)} [S^{(j)}]^{-1} \}$.
\vskip1ex
\noindent As a baseline, we compute the solutions with only centering (i.e., \eqref{eq:scaling_sparsegrouplasso} without the $\{ S^{(j)} \}$).   
\vskip1ex
\noindent From an optimization perspective, formulation \eqref{eq:scaling_sparsegrouplasso} is most convenient as the form of the penalty with dependence only on $\beta$ gives rise to notable simplifications in block coordinate descent and proximal gradient methods (see, e.g., \cite{Simon2012b}). On the other hand, we consider \eqref{eq:SVDs_sparsegrouplasso} as most appropriate from the point of view of standardization as it combines column scaling for the $\ell_1$ penalty with a penalty on the fit per block rather than coefficients per block. Since it is not longer guaranteed that $\nnorm{\Pi^{\perp} X^{(j)} [S^{(j)}]^{-1} \wh{\beta}^{(j)}}_2 = \nnorm{\wh{\beta}^{(j)}}_2$,
\eqref{eq:scaling_sparsegrouplasso} is not equivalent to \eqref{eq:SVDs_sparsegrouplasso} in general. Figure \ref{fig:sparsegrouplasso_results} indicates that the difference is not substantial though. By contrast, the performance of \eqref{eq:SVD_sparsegrouplasso} is clearly inferior to both \eqref{eq:scaling_sparsegrouplasso} and \eqref{eq:SVDs_sparsegrouplasso}. The latter observation is expected since column re-scaling appears necessary to balance the $\ell_1$ penalty with respect to heterogeneity among level-wise frequencies.
\vskip1ex
\noindent Model fitting and evaluation are done in the same fashion as in Setting 1. Defining $\Lambda_{\textsf{w/o}}^{\text{lasso}} = 2^{\kappa} \sigma \sqrt{\log(p \cdot L)/n}$ and $\Lambda_{\textsf{w/}}^{\text{lasso}} = n^{-1/2}
\Lambda_{\textsf{w/o}}^{\text{lasso}}$ with $\kappa$ as in Setting 1, we consider
\begin{itemize}
\item $\lambda \in \Lambda_{\textsf{w/}}$ and $\lambda^{\text{lasso}} \in \Lambda_{\textsf{w/}}^{\text{lasso}}$ (for \underline{scaling} and \underline{SVD + scaling}),
\item $\lambda \in \Lambda_{\textsf{w/}}$ and $\lambda^{\text{lasso}} \in \Lambda_{\textsf{w/o}}^{\text{lasso}}$ (for \underline{SVD}).
\end{itemize}
Furthermore, we let $\tau \in \{0,0.1,\ldots,1\}$, and choose both $\kappa$ and $\tau$ such that the mean squared prediction error on a separate validation set is minimized.

\begin{figure}%[h]
\hspace*{4ex}\mbox{\includegraphics[width = 0.40\textwidth]{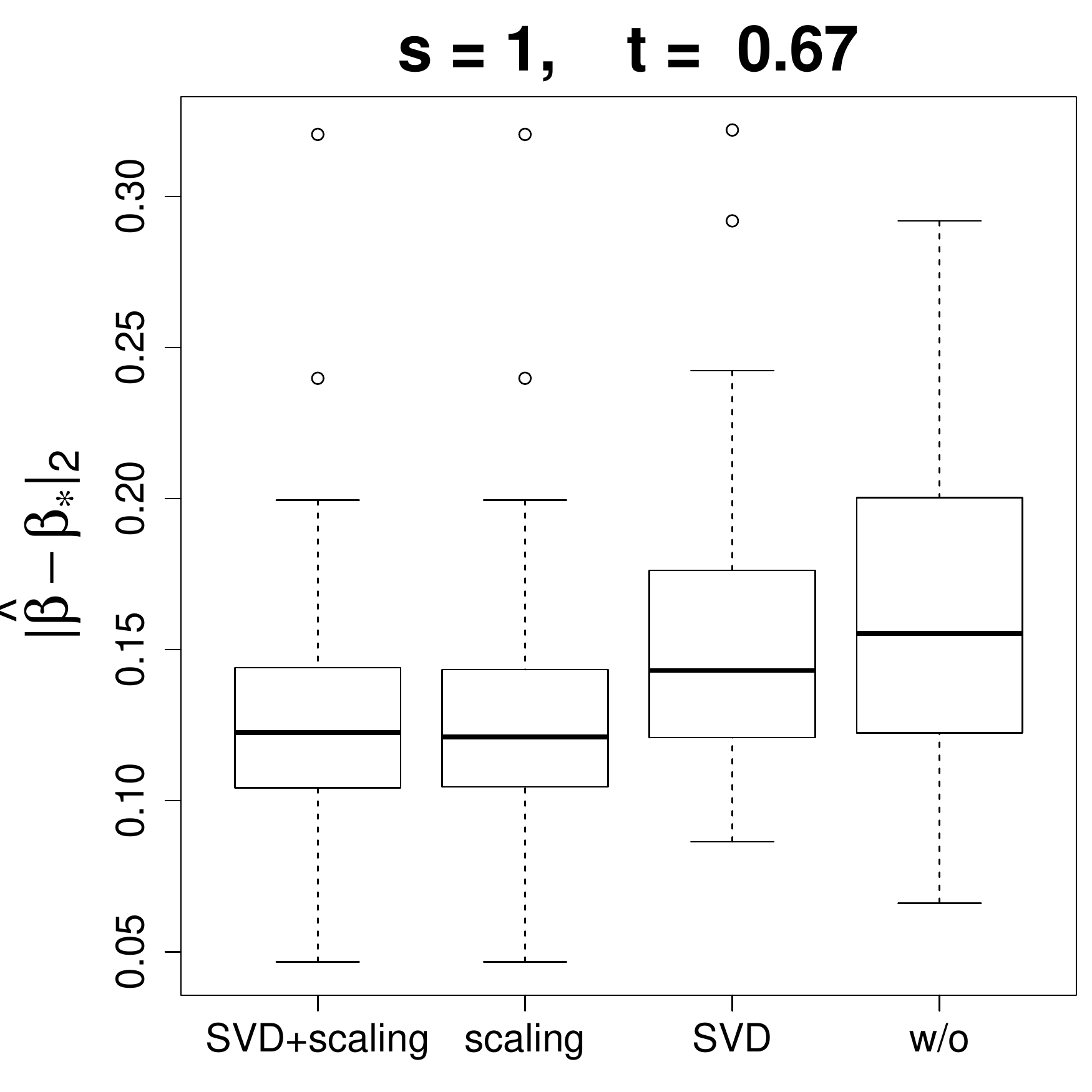} \hspace*{8ex} \includegraphics[width = 0.40\textwidth]{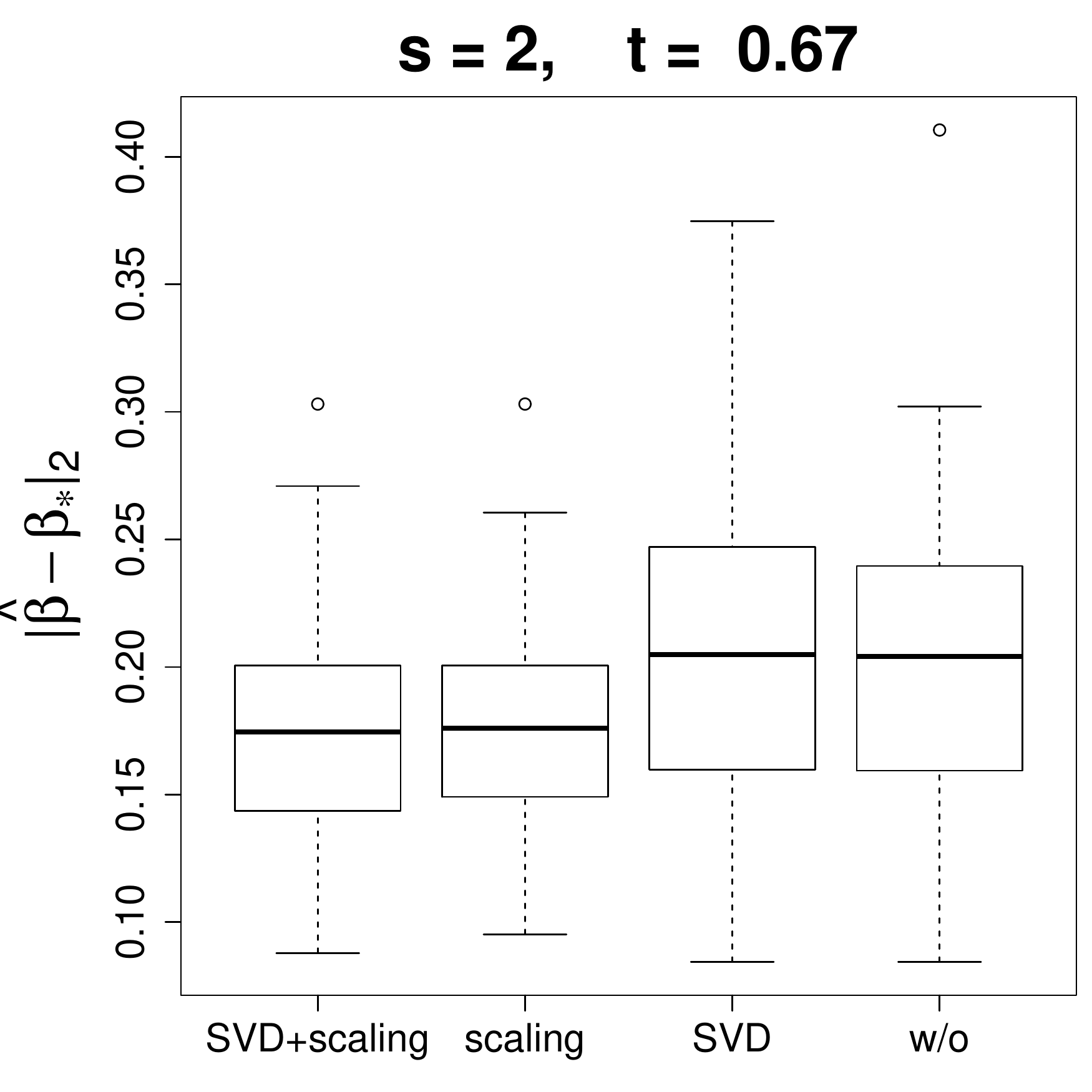} }

%\mbox{\includegraphics[width = 0.32\textwidth]{../experiments_standardization/sparse_s1_t095} \includegraphics[width = 0.32\textwidth]{../experiments_standardization/sparse_s2_t095}} 
%\mbox{\includegraphics[width = 0.32\textwidth]{../experiments_standardization/nonsparse_s2_t067} \includegraphics[width = 0.24\textwidth]{../experiments_standardization/nonsparse_s2_t095}}}
\caption{Boxplots of the estimation errors $\nnorm{\hat{\beta} - \beta_*}_2$ for Setting 2 (sparse group lasso) over $50$ independent replications. "w/o" refers to group lasso fits after centering only.}\label{fig:sparsegrouplasso_results}
\end{figure} 

% sparsity level per group: $3$

% $\tau = \{0,0.1,0.2,\ldots,1\}$

% $\lambda_{\textsf{lasso}} = \sigma 2^{\kappa} \sqrt{\log(p)/n}$

\section{Conclusion}\label{sec:conclusions}
In this paper, we have re-visited block-wise standardization for the group lasso as discussed in Simon and Tibshirani \cite{Simon2012}. The setting
of the present paper concerns the important special case of categorical predictors. We have shown that the simplicity of the corresponding indicator matrices enables standardization without computationally expensive matrix decompositions. In the same vein, the case of first-order interactions can be reduced to complexity $O(L + M)$
as compared to $O(L \cdot M)$, where $L$ and $M$ denote the number of levels of two interacting categorical predictors. We have pointed out that our approach does not generalize to the sparse group lasso. Numerical studies presented herein confirm the positive effect of proper standardization on the estimation error.

%as opposed to the vanilla group lasso penalty 
% \subsection*{Acknowledgments}

\subsection*{Appendix: Proof of Theorem \ref{theo:interaction}}\label{app:theo1}
We start by noting that property \eqref{eq:standardization_condition_marginal} holds in view of the rationale underlying the treatment of the non-interaction case.

\noindent Turning to \eqref{eq:standardization_condition_interaction}, we observe that by construction
\begin{equation}\label{eq:nullspace_interaction_matrix}
S^{(1.2)} \text{range}(\mc{N}) = \text{null}(\texttt{P}^{\perp} X^{(1.2)} [S^{(1.2)}]^{-1}).
\end{equation}
Moreover, one computes that 
\begin{align}
  [X^{(1)}]^{\T} X^{(1.2)} [S^{(1.2)}]^{-1} = \begin{pmatrix}
          n_{11}^{1/2} & \ldots   & n_{1M}^{1/2}  & 0 & \ldots & \ldots & \ldots & \ldots & 0 \\
          0 & \ldots   &  0    &  n_{21}^{1/2} & \ldots   & n_{2M}^{1/2} & 0 & \ldots & 0 \\
           \vdots & \vdots   &  \ddots    &  &   &    & \ddots &  &  \\
                    0 & \ldots   &  \ldots    &  \ldots  & \ldots & 0   & n_{L1}^{1/2} & \ldots & n_{LM}^{1/2} \\
           \end{pmatrix},
\end{align}

  \begin{align}
    &[X^{(2)}]^{\T} X^{(1.2)} [S^{(1.2)}]^{-1} \notag \\
    &= \left( \begin{array}{ccccccccccccc}
                                                       n_{11}^{1/2} & 0  & \ldots  & 0 & n_{21}^{1/2} & 0 & \ldots &  0 & n_{L1}^{1/2} & 0 & \ldots & \ldots & 0  \\
                                                                    & n_{12}^{1/2} & 0 & \ldots & 0 & n_{22}^{1/2} & 0 & \ldots  &  0 & n_{L2}^{1/2} & 0 & \ldots & 0 \\
                                                                    &  & \ddots  &  &   & \ddots &  &  &   &  & \ddots &  &  \\
                                                                    &  &  n_{1M}^{1/2} & 0 & \ldots  & 0 & n_{2M}^{1/2} & 0 &  \ldots & 0 & 0 & \ldots & n_{LM}^{1/2}
           \end{array} \right)
\end{align}
Now consider $v \in \R^{L \cdot M}$ with 
\begin{equation*}
v = \left(v_{11} \ldots v_{1M} \; v_{21} \ldots v_{2M} \; \ldots \, \ldots v_{L1} \ldots v_{LM} \right)^{\T}. 
\end{equation*}
Then
{\small
\begin{equation}\label{eq:vsums}
  [X^{(1)}]^{\T} X^{(1.2)} [S^{(1.2)}]^{-1}  v = \begin{pmatrix}
    \sum_{m=1}^{M} n_{1m}^{1/2}  v_{1m}  \\
    \vdots \\
     \sum_{m=1}^{M} n_{Lm}^{1/2}   v_{Lm} \\
   \end{pmatrix}, \quad
   [X^{(2)}]^{\T} X^{(1.2)} [S^{(1.2)}]^{-1}  v = \begin{pmatrix}
    \sum_{l=1}^{L} n_{l1}^{1/2}  v_{l1}  \\
    \vdots \\
     \sum_{l=1}^{L} n_{lM}^{1/2}   v_{lM} \\
  \end{pmatrix}
\end{equation}}
With a reasoning similar to that in the proof of Lemma \ref{lem:Lim_Hastie}, one shows that
\begin{equation}\label{eq:nullspace_interaction_coefficient}
\nscp{\delta}{\wh{\beta}^{(1.2)}} = 0  \;\;\,\forall \delta \in \text{null}(\texttt{P}^{\perp} X^{(1.2)} [S^{(1.2)}]^{-1}). 
\end{equation}
Next, we observe that
\begin{align*}
  &S^{(1.2)} \mc{N}_1 =  \big\{\textstyle\sum_{m = 1}^M n_{1m}^{1/2}, \textstyle\sum_{m = M+1}^{2M} n_{2m}^{1/2}, \ldots, \textstyle\sum_{m = (L-1) \cdot M +1}^{L \cdot M} n_{Lm}^{1/2} \big\},  \\[1ex]
  &S^{(1.2)} \mc{N}_2 =  \big\{\textstyle\sum_{l = 1}^L n_{l1}^{1/2}, \textstyle\sum_{l = 1}^{L} n_{l2}^{1/2}, \ldots,  \textstyle\sum_{l = 1}^{L} n_{lM}^{1/2} \big\}. 
\end{align*}
Combining \eqref{eq:nullspace_interaction_matrix}, and \eqref{eq:nullspace_interaction_coefficient}, we conclude that
\begin{equation}\label{eq:orthogonality_interaction}
\nscp{S^{(1.2)} v}{\wh{\beta}^{(1.2)}} = 0  \;\; \forall v \in \mc{N}_1 \cup \mc{N}_2. 
\end{equation}
In particular, in light of \eqref{eq:vsums}
\begin{equation*}
[X^{(1)}]^{\T} X^{(1.2)} [S^{(1.2)}]^{-1} \wh{\beta}^{(1.2)} = 0, \qquad  [X^{(2)}]^{\T} X^{(1.2)} [S^{(1.2)}]^{-1} \wh{\beta}^{(1.2)} = 0. 
\end{equation*}
This implies that $\texttt{P}^{\perp} X^{(1.2)} [S^{(1.2)}]^{-1} \wh{\beta}^{(1.2)} = X^{(1.2)} [S^{(1.2)}]^{-1} \wh{\beta}^{(1.2)}$, and
by the fact that the columns of $X^{(1.2)} [S^{(1.2)}]^{-1}$ are orthonormal, we finally obtain \eqref{eq:standardization_condition_interaction}. It remains to check
\eqref{eq:nochange_in_fit} and \eqref{eq:sumtooneconstraints_satisfied}. The first property holds trivially. Property \eqref{eq:sumtooneconstraints_satisfied}
can be stated equivalently as $\nscp{v}{\wh{\theta}^{(1.2)}} = 0$ for all $v \in \mc{N}$ which is true by construction of $\wh{\theta}^{(1.2)}$.

\bibliographystyle{siam}
{
\bibliography{references_grouplassonote.bib}
}

\end{document}